\newtheorem{mythm}{Theorem}
\newtheorem{mydef}{Definition}
\newtheorem{myrem}{Remark}
\begin{document}

\title{Synchronization and Control for Multi-Weighted and Directed Complex Networks}

\author{Xiwei Liu,~\IEEEmembership{Senior Member, IEEE}
\thanks{This work was supported by the National Science Foundation of China under Grant No. 62073243, 61673298, Shanghai Rising-Star Program under Grant No. 17QA1404500, Natural Science Foundation of Shanghai under Grant No. 17ZR1445700, the Fundamental Research Funds for the Central Universities.}

\thanks{Xiwei Liu is with Department of Computer Science and Technology, Tongji University, and with the Key Laboratory of Embedded System and Service Computing, Ministry of Education, Shanghai 201804, China. E-mail: xwliu@tongji.edu.cn; xwliu.sh@gmail.com}
}

\maketitle
\begin{abstract}
The study of complex networks with multi-weights has been a hot topic recently. For a network with a single weight, previous studies have shown that they can promote synchronization. But for complex networks with multi-weights, there are no rigorous analysis to show that synchronization can be reached faster. In this paper, the complex network is allowed to be directed, which will make the synchronization analysis difficult for multiple couplings. In virtue of the normalized left eigenvectors (NLEVec) corresponding to the zero eigenvalue of coupling matrices, we prove that if the Chebyshev distance between NLEVec is less than some value, which is defined as the allowable deviation bound, then the synchronization and control will be realized with sufficiently large coupling strengths, i.e., all coupling matrices do accelerate synchronization. Moreover, adaptive rules are also designed for the coupling strength.
%

\end{abstract}
\begin{IEEEkeywords}
Synchronization and control, normalized left eigenvector, directed network, multi-weights, adaptive.
\end{IEEEkeywords}

\section{Introduction}
Synchronization, as an interesting collective behavior of complex networks, has attracted many researchers' attention. Previous works mainly focus on the synchronization and its control for networks with a single coupling matrix, which has been widely and deeply investigated. Among them, pioneering works \cite{murray2004}, \cite{wu2005}, \cite{luchen2006} all use the normalized left eigenvectors (NLEVec) corresponding to the zero eigenvalue of coupling matrix to set up the framework of synchronization analysis. Using this NLEVec, a dummy node is defined, and the synchronization between nodes is transformed to the synchronization between nodes and this dummy node. Moreover, this NLEVec can also be used in the pinning control problem for networks, see \cite{chenliulu2007}. The main stream of investigation for synchronization and control follows these works' technical route, which makes NLEVec vital in analysis \cite{liu2010}.

However, multiple coupling is more physical than single coupling in actual networks. For example, in the social network, people interact with others by using many different ways, such as Facebook, WeChat, mail, E-mail, telephone, etc. In the past year, because of multiple sources of infection for COVID-19, like human-to-human transmission, goods-to-human transmission, environment-to-human transmission, and so on, its control and prevention is still a challenging question for our world. `Complexity arises from the diversity of the interactions between components', said in \cite{BA2021}.

Recently, the study of complex networks with multi-weights (CNMWs) has been a hot topic. \cite{lilixiang2008} and \cite{an2014} took the public traffic network as a CNMWs, where each single weight represents a transportation type, like railway, highway, airplane, and so on, then investigated its synchronization. \cite{du2019} formed a two-layer-coupled public bus and subway traffic CNMWs, and studied its synchronization control problem. \cite{zhao2015} investigated the synchronization for a network with multiple time delays. More study of synchronization for CNMWs including: fractional-order CNMWs in \cite{s2019}, finite-time synchronization in \cite{huang2018} and \cite{wang2020}, pinning control in \cite{feng2019} and \cite{wang2019}, event-triggered control in \cite{huang2020} and \cite{chenwz2020}, etc.

However, there still exists a key problem being not solved: for CNMWs without control, if the coupling matrices are asymmetric, how to prove its synchronization? In this case, different coupling matrices will cause different NLEVec, which will make the Lyapunov function hard to design. Therefore,

1. We firstly loosen the requirement of NLEVec for complex networks with a single weight (CNSW), and prove that if the Chebyshev distance between NLEVec and a normalized positive vector is less than a bound, then this vector can also be used for synchronization analysis. This fact will greatly improve the rigid property of original synchronization technique and successfully bridge the gap between single weighted and multi-weighted networks.

2. According to the above analysis, for directed CNMWs, we design a combination of different NLEVec for coupling matrices, and prove its validity for synchronization.

3. We also consider the synchronization of directed CNMWs under pinning control, and the corresponding adaptive rules for coupling strength are also investigated.

The rest is organized as follows. In Section \ref{generalization}, the generalization from NLEVec to any vector for CNSW is presented, two bounds $\mathcal{ADSB}$ and $\mathcal{ADCB}$ are defined. In Section \ref{main}, the (adaptive) synchronization and control for CNMWs is investigated, and we mainly focus on networks with two coupling matrices. Finally, some conclusions and discussions are given in Section \ref{con}.

\section{Synchronization for single-weighted and directed network}\label{generalization}

In this section, we will solve the following problem: \emph{except the NLEVec, can we find other vectors to prove the synchronization for single-weighted and directed network?}

\subsection{Synchronization problem}

Suppose the network model is described as
\begin{align}\label{single}
\dot{z}_i(t)=h(z_i(t))+c\sum_{j=1}^NG_{ij}\Gamma z_j(t), \quad i=1,\cdots,N
\end{align}
where $z_i(t)\in R^n$ is the state of node $i$, whose self behavior is defined by function $h(\cdot): R^n\to R^n$ with condition
\begin{align*}
(x-y)^T(h(x)-h(y))\le L_h(x-y)^T(x-y), \forall x,y\in R^n
\end{align*}
where $L_h>0$, and node $i$ is affected by its neighbours; $c>0$ is the coupling strength; outer coupling matrix $G=(G_{ij})\in R^{N\times N}$ is a Metzler matrix with zero-row-sum, the network is strongly connected and directed, so $G$ is asymmetric; and $\Gamma=\mathrm{diag}(\gamma_1,\cdots,\gamma_n)$ is the inner matrix with $\gamma_i>0$, $\forall i$.

To answer the proposed question, we choose a vector $\theta=(\theta_1,\cdots,\theta_N)^T\in R^{N}$ with $\sum_{i=1}^N\theta_i=1$ (normalized) and $\theta_i>0$ for $\forall i$. Now, we can define a dummy target as
\begin{align}\label{tar1}
\overline{z}(t)=\sum_{i=1}^N\theta_iz_i(t)
\end{align}

Now, we have the following theorem.
\begin{mythm}\label{generalize}
For network (\ref{single}), if matrix
\begin{align}\label{ivt-mat}
G_{\theta}=[(\Theta-\theta\theta^{T})G+G^{T}(\Theta-\theta\theta^{T})]/2
\end{align}
is negative definite in the transverse space ${\mathcal TS}=\{X|X\in R^{N}, X^T{\bf 1}=0\}$, where ${\bf 1}=(1,\cdots,1)^T$ and $\Theta=\mathrm{diag}(\theta)$, denote $\lambda_2(G_{\theta})$ as the largest eigenvalue (EVal) of $G_{\theta}$ in $\mathcal{TS}$ (also the second largest Eval of $G_{\theta}$ in the whole space), then exponential synchronization (Expo-Syn) can be realized if
\begin{align}\label{condition1}
L_h+c\lambda_2(G_{\theta})\min_{k}\gamma_k/\|\Theta-\theta\theta^T\|_2<0.
\end{align}
\end{mythm}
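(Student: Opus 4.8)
The plan is to exhibit a Lyapunov function built from the weighted target (\ref{tar1}) and show it decays exponentially under (\ref{condition1}). Writing the synchronization error as $\delta_i(t)=z_i(t)-\overline{z}(t)$, I would take
\[
V(t)=\frac12\sum_{i=1}^N\theta_i\,\delta_i^T\delta_i=\frac12\sum_{i,j=1}^N(\Theta-\theta\theta^T)_{ij}\,z_i^Tz_j,
\]
the two expressions coinciding because $\sum_i\theta_i=1$. A weighted Jensen inequality gives $\Theta-\theta\theta^T\succeq0$, so $V\ge0$ and $V=0$ exactly when all $z_i$ coincide; hence exponential decay of $V$ is equivalent to Expo-Syn.

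Next I would differentiate. Since $\sum_i\theta_i\delta_i=0$, the $\dot{\overline{z}}$ contribution cancels and $\dot V=\sum_i\theta_i\delta_i^T\dot z_i$. For the self-dynamics term I insert $h(\overline{z})$ at no cost (again using $\sum_i\theta_i\delta_i=0$) and apply the one-sided Lipschitz hypothesis on $h$ to obtain $\sum_i\theta_i\delta_i^T\big(h(z_i)-h(\overline{z})\big)\le L_h\sum_i\theta_i\delta_i^T\delta_i=2L_hV$. For the coupling term, the zero-row-sum of $G$ lets me replace $z_j$ by $\delta_j$; collecting the $k$-th coordinates into $d_k=(\delta_{1k},\dots,\delta_{Nk})^T\in R^N$ turns it into $c\sum_k\gamma_k\,d_k^T(\Theta G)d_k$. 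The key algebraic step is then to recognize $G_\theta$: symmetrizing the scalar $d_k^T(\Theta G)d_k$ and using $\theta^Td_k=\sum_i\theta_i\delta_{ik}=0$, the extra terms $d_k^T\theta\theta^TGd_k$ vanish, so $d_k^T(\Theta G)d_k=d_k^TG_\theta d_k$. This is precisely why the $-\theta\theta^T$ correction appears in (\ref{ivt-mat}).

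The hard part will be the mismatch between the constraint actually satisfied by the errors, $\theta^Td_k=0$, and the transverse space $\mathcal{TS}$ (orthogonal to $\mathbf{1}$) on which $\lambda_2(G_\theta)$ is defined. I would resolve this through the common null vector $\mathbf{1}$ of both relevant matrices: one checks $G_\theta\mathbf{1}=0$ and $(\Theta-\theta\theta^T)\mathbf{1}=0$, so decomposing $d_k=\alpha_k\mathbf{1}+w_k$ with $w_k\in\mathcal{TS}$ yields $d_k^TG_\theta d_k=w_k^TG_\theta w_k\le\lambda_2(G_\theta)\|w_k\|_2^2$ and $d_k^T(\Theta-\theta\theta^T)d_k=w_k^T(\Theta-\theta\theta^T)w_k\le\|\Theta-\theta\theta^T\|_2\|w_k\|_2^2$. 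The second estimate bounds $V$ from above by $\tfrac12\|\Theta-\theta\theta^T\|_2\sum_k\|w_k\|_2^2$, giving $\sum_k\|w_k\|_2^2\ge 2V/\|\Theta-\theta\theta^T\|_2$.

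Finally, factoring $\min_k\gamma_k$ out of the coupling estimate, with care that $\lambda_2(G_\theta)<0$ reverses the inequalities, combines everything into
\[
\dot V\le 2\Big[L_h+c\,\lambda_2(G_\theta)\,\tfrac{\min_k\gamma_k}{\|\Theta-\theta\theta^T\|_2}\Big]V.
\]
Condition (\ref{condition1}) makes the bracket a negative constant $-\eta$ with $\eta>0$, so $\dot V\le-2\eta V$ and $V(t)\le V(0)e^{-2\eta t}\to0$, establishing Expo-Syn. I expect the sign-tracking in the last two steps and the projection onto $\mathcal{TS}$ via $\mathbf{1}$ to be the only genuinely delicate points; the nonlinear and row-sum manipulations are routine.
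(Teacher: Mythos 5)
Your proposal is correct and follows essentially the same route as the paper's own proof: the identical weighted Lyapunov function $V(t)=\frac12\sum_{i=1}^N\theta_i[z_i(t)-\overline{z}(t)]^T[z_i(t)-\overline{z}(t)]$, the one-sided Lipschitz bound yielding the $2L_hV$ term, the zero-row-sum reduction and recognition of $G_\theta$ via $\theta^Td_k=0$, and the final estimate $\dot V\le 2\big[L_h+c\lambda_2(G_\theta)\min_k\gamma_k/\|\Theta-\theta\theta^T\|_2\big]V$. Your explicit decomposition $d_k=\alpha_k\mathbf{1}+w_k$ with $w_k\in\mathcal{TS}$, justified by checking $G_\theta\mathbf{1}=0$ and $(\Theta-\theta\theta^T)\mathbf{1}=0$, rigorously fills in the projection step that the paper compresses into its last displayed inequality, but it is a detail of the same argument rather than a different approach.
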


\begin{proof} 
Define
\begin{align}\label{lya}
V(t)=&\frac{1}{2}\sum_{i=1}^N\theta_i[z_i(t)-\overline{z}(t)]^T[z_i(t)-\overline{z}(t)]\nonumber\\
=&\frac{1}{2}Z(t)^T[(\Theta-\theta\theta^T)\otimes I_n]Z(t),
\end{align}
where $Z(t)=(z_1(t)^T,\cdots,z_N(t)^T)^T$. When synchronization is reached, $V(t)=0$. Therefore, suppose synchronization has not been realized, differentiating $V(t)$ along (\ref{single}),
\begin{align*}
&\dot{V}(t)=\sum_{i=1}^N\theta_i[z_i(t)-\overline{z}(t)]^T[\dot{z}_i(t)-\dot{\overline{z}}(t)]\nonumber\\
=&\sum_{i=1}^N\theta_i[z_i(t)-\overline{z}(t)]^T[h(z_i(t))+c\sum_{j=1}^NG_{ij}\Gamma z_j(t)-\dot{\overline{z}}(t)]\nonumber\\
=&\sum_{i=1}^N\theta_i[z_i(t)-\overline{z}(t)]^T[h(z_i(t))-h(\overline{z}(t))]\nonumber\\
&+c\sum_{i=1}^N\theta_i[z_i(t)-\overline{z}(t)]^T\sum_{j=1}^NG_{ij}\Gamma z_j(t)\nonumber\\
\le&L_h\sum_{i=1}^N\theta_i[z_i(t)-\overline{z}(t)]^T[z_i(t)-\overline{z}(t)]\nonumber\\
&+cZ(t)^T\bigg[[(\Theta-\theta\theta^T)G]\otimes \Gamma\bigg]Z(t)\nonumber\\
=&2L_hV(t)+cZ(t)^T[G_{\theta}\otimes \Gamma]Z(t)\nonumber\\
\le&2[L_h+c\lambda_2(G_{\theta})\min_{k}\gamma_k/\|\Theta-\theta\theta^T\|_2]V(t)<0.
\end{align*}
Therefore, exponential synchronization will be achieved.
\end{proof}

\begin{myrem}
If the vector $\theta$ is chosen as NLEVec $\xi>0$, i.e., $\xi^TG=0$, then the matrix $G_{\theta}$ will become
\begin{align}\label{ref-mat}
G_{\xi}=&[(\Xi-\xi\xi^{T})G+G^{T}(\Xi-\xi\xi^{T})]/2\nonumber\\
=&(\Xi G+G^{T}\Xi)/2.
\end{align}
According to the definition of NLEVec $\xi$, we have that $G_{\xi}$ does be negative definite in $\mathcal{TS}$. This fact has been widely used in previous works \cite{murray2004}-\cite{liu2010}.
\end{myrem}

Since for any vector $\theta$, the condition $G_{\theta}$ is negative definite in $\mathcal{TS}$ cannot be ensured, and also notice that $G_{\xi}$ does be negative definite in $\mathcal{TS}$, we can give some scopes of $\theta$ by comparing with $\xi$. Denote the error
\begin{align}\label{error}
\delta=\theta-\xi=(\delta_1,\cdots,\delta_N)^T,
\end{align}
since $\theta$ and $\xi$ are all normalized vectors, so the sum of all elements in $\delta$ is zero, and let $\Delta=\mathrm{diag}(\delta)$, then
\begin{align*}
G_{\theta}=[(\Delta-\theta\delta^{T})G+G^{T}(\Delta-\delta\theta^{T})]/2+G_{\xi}.
\end{align*}
Therefore, for any vector $y$ in $\mathcal{TS}$, we have
\begin{align}\label{process}
y^TG_{\theta}y=&y^T\bigg[[(\Delta-\theta\delta^{T})G+G^{T}(\Delta-\delta\theta^{T})]/2+G_{\xi}\bigg]y\nonumber\\
\le &y^T(\Delta-\theta\delta^{T})Gy+\lambda_{2}(G_{\xi})y^Ty\nonumber\\
\le&\|(\Delta-\theta\delta^{T})G\|_2y^Ty+\lambda_{2}(G_{\xi})y^Ty\nonumber\\
\le&\bigg[\sqrt{N}\|(\Delta-\theta\delta^{T})G\|_1+\lambda_{2}(G_{\xi})\bigg]y^Ty\nonumber\\
\le&\bigg[\sqrt{N}[\|\Delta\|_1+\|\theta\delta^{T}\|_1]\|G\|_1+\lambda_{2}(G_{\xi})\bigg]y^Ty\nonumber\\
=&\bigg[2\sqrt{N}\max_{k}|\delta_k|\cdot\|G\|_1+\lambda_2(G_{\xi})\bigg]y^Ty
\end{align}

Therefore, a sufficient condition can be stated as: when
\begin{align}\label{estimate}
\max_{i}|\delta_i|\le \frac{|\lambda_2(G_{\xi})|}{2\sqrt{N}\|G\|_1},
\end{align}
the matrix $G_{\theta}$ is negative definite in $\mathcal{TS}$. \emph{We can conclude that, for the proposed question, we have a positive answer, there do exist vectors $\theta$, if the deviation for $\theta$ from NLEVec $\xi$ is small enough (satisfying the inequality (\ref{estimate})), the matrix $G_{\theta}$ does be negative definite in $\mathcal{TS}$, which can be used for the analysis of synchronization}.

\begin{myrem}
More accurate estimation may be obtained in (\ref{process}) by using different norms and inequalities, interested readers are encouraged to improve this estimation result (\ref{estimate}).
\end{myrem}

\begin{mydef}
The allowable deviation synchronization bound ($\mathcal{ADSB}$) for matrix $G$ from its NLEVec $\xi$ is:
\begin{align}\label{bound}
\mathcal{ADSB}(G)=\frac{|\lambda_2(G_{\xi})|}{2\sqrt{N}\|G\|_1}
\end{align}
\end{mydef}
Since $N$ is the dimension, and $\xi$ is its NLEVec, therefore, $\mathcal{ADSB}(G)$ is completely determined by matrix $G$.

In the next, we will present an example to illustrate the correctness of above theory and analysis.

{\bf Example 1:} Consider the following coupling matrix
\begin{align}
G=\left[
\begin{array}{ccc}
-3& 1& 2\\
2& -4& 2\\
1& 1& -2
\end{array}
\right]
\end{align}
NLEVec of $G$ is $\xi=(0.3, 0.2,0.5)^T$; EVal of matrix $G_{\xi}$ in (\ref{ref-mat}) are: $0, -1.1768, -1.5232$, so $\lambda_2(G_{\xi})=-1.1768$.

According to the estimation (\ref{estimate}), Chebyshev distance
\begin{align*}
\max_{i}|\delta_i|\le \frac{1.1768}{2\sqrt{3}\cdot 6}=0.0566.
\end{align*}
Now, we can choose vector $\theta=(0.25, 0.25, 0.5)^T$, then EVal of matrix $G_{\theta}$ in (\ref{ivt-mat}) are: $0, -1.2096, -1.5404$. On the other hand, simple computer program does find many normalized vectors $\theta$, such that $G_{\theta}$ has positive EVal. For example, if $\theta=(0.0025, 0.52, 0.4775)^T$, EVal of $G_{\theta}$ would be: $0.004, 0, -2.2612$.

\begin{myrem}
Notice the fact that NLEVec $\xi$ and $\theta$ are both positive scalars, so the angle between them must be acute. In the above example, we can calculate the angle between NLEVec $\xi$ and the valid vector $\theta=(0.25, 0.25, 0.5)^T$ is $0.115$ rad; on the other hand, the angle between NLEVec $\xi$ and the invalid vector $\theta=(0.0025, 0.52, 0.4775)^T$ is $0.6611$ rad, so an intuitive understanding of the limitation of valid vectors $\theta$ is that they should near the NLEVec as close as possible. Of course, this does not mean that NLEVec is the optimal direction. Considering the above example, the second largest eigenvalue of $A_{\theta}$ with $\theta=(0.25, 0.25, 0.5)^T$ is $-1.2096$, while the second largest eigenvalue of $A_{\xi}$ is just $-1.1768$. Therefore, \emph{the optimal direction may not be the NLEVec, but from the viewpoint of theoretical analysis, NLEVec is an ideal reference direction for investigation}.
\end{myrem}

\subsection{Control problem}

Next, we consider the pinning control problem. The network with pinning control added on the first node is
\begin{align}\label{control}
\dot{z}_i(t)=h(z_i(t))+c\sum\limits_{j=1}^N\tilde{G}_{ij}\Gamma z_j(t)
\end{align}
where $z(t)$ is the synchronization target with $\dot{z}(t)=h(z(t))$, and the new matrix $\tilde{G}=G-\mathrm{diag}(d,0,\cdots,0)$ with $d>0$.

According to the result in \cite{chenliulu2007}, the matrix
\begin{align}\label{control-matrix}
_{\xi}\tilde{G}=\frac{\Xi\tilde{G}+\tilde{G}^T\Xi}{2}
\end{align}
is negative definite, where $\xi$ is the NLEVec for $G$. Therefore, for any vector $y\in R^N$, let $\Delta=\Theta-\Xi$, where $\Theta=\mathrm{diag}(\theta)$ and $\Xi=\mathrm{diag}(\xi)$, with the same process as (\ref{process}), we have
\begin{align}\label{process2}
y^T(_{\theta}\tilde{G})y=&y^T\frac{\Theta\tilde{G}+\tilde{G}^T\Theta}{2}y=y^T\bigg[\frac{\Delta\tilde{G}+\tilde{G}^T\Delta}{2}+(_{\xi}\tilde{G})\bigg]y\nonumber\\
\le &y^T(\Delta\tilde{G})y+\lambda_{\max}(_{\xi}\tilde{G})y^Ty\nonumber\\
\le&[\sqrt{N}\max_{i}|\delta_i|\cdot\|\tilde{G}\|_1+\lambda_{\max}(_{\xi}\tilde{G})]y^Ty,
\end{align}
where $\lambda_{\max}(\cdot)$ means the largest EVal of the matrix.

\begin{mydef}
The allowable deviation control bound ($\mathcal{ADCB}$) for matrix $\tilde{G}$ from the NLEVec $\xi$ for $G$ is:
\begin{align}\label{cbound}
\mathcal{ADCB}(\tilde{G})=\frac{|\lambda_{\max}(_{\xi}\tilde{G})|}{\sqrt{N}\|\tilde{G}\|_1}
\end{align}
\end{mydef}

\begin{mythm}
For network (\ref{control}), suppose the Chebyshev distance $\max_{i}|\theta_i-\xi_i|\le \mathcal{ADCB}(\tilde{G})$, then Expo-Syn can be realized if $L_h+c\lambda_{\max}(_{\theta}\tilde{G})\min_{k}(\gamma_k)/\max_{i}{\theta_i}<0$.
\end{mythm}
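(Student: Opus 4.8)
The plan is to reproduce the Lyapunov argument of Theorem~\ref{generalize}, but now measuring each node's deviation from the fixed target $z(t)$ rather than from a moving weighted average. First I would introduce the error $e_i(t)=z_i(t)-z(t)$, stack $E(t)=(e_1(t)^T,\cdots,e_N(t)^T)^T$, and take the weighted quadratic
\[
V(t)=\frac{1}{2}\sum_{i=1}^N\theta_i\,e_i(t)^Te_i(t)=\frac{1}{2}E(t)^T[\Theta\otimes I_n]E(t),
\]
which is positive definite in $E$ since every $\theta_i>0$; note that $\Theta$ now replaces $\Theta-\theta\theta^T$ because we no longer project onto $\mathcal{TS}$. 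Writing the pinning-controlled system in these error coordinates, and using $\dot z(t)=h(z(t))$ together with the fact that the control on node $1$ is referenced to the target, the dynamics close in the errors alone,
\[
\dot e_i(t)=h(z_i(t))-h(z(t))+c\sum_{j=1}^N\tilde G_{ij}\Gamma e_j(t).
\]

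Next I would differentiate $V$ along this flow. The self-dynamics term is dispatched exactly as in Theorem~\ref{generalize}: the hypothesis $(x-y)^T(h(x)-h(y))\le L_h(x-y)^T(x-y)$ yields $\sum_i\theta_i e_i^T[h(z_i)-h(z)]\le 2L_hV(t)$. For the coupling term I would symmetrize the scalar quadratic form to bring in the matrix of (\ref{control-matrix}),
\[
c\sum_{i=1}^N\theta_i e_i^T\sum_{j=1}^N\tilde G_{ij}\Gamma e_j=cE^T[(\Theta\tilde G)\otimes\Gamma]E=cE^T[(_{\theta}\tilde G)\otimes\Gamma]E,
\]
the last equality being valid because $E^T[(\Theta\tilde G)\otimes\Gamma]E$ is a scalar and hence equals its own transpose.

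The decisive input is that the hypothesis $\max_i|\theta_i-\xi_i|\le\mathcal{ADCB}(\tilde G)$, substituted into estimate (\ref{process2}), makes ${}_{\theta}\tilde G$ negative definite, so $\lambda_{\max}(_{\theta}\tilde G)<0$. Splitting the Kronecker form into its $n$ coordinate blocks (with $\tilde E_k$ the vector of the $k$-th coordinates across all nodes), $E^T[(_{\theta}\tilde G)\otimes\Gamma]E=\sum_{k=1}^n\gamma_k\,\tilde E_k^T(_{\theta}\tilde G)\tilde E_k$, and bounding each block by its largest eigenvalue, I would use that $\gamma_k>0$ times the negative number $\lambda_{\max}(_{\theta}\tilde G)$ is least negative for the smallest $\gamma_k$, giving $cE^T[(_{\theta}\tilde G)\otimes\Gamma]E\le c\lambda_{\max}(_{\theta}\tilde G)\min_k\gamma_k\,E^TE$. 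Finally, from $V(t)\le\frac{1}{2}\max_i\theta_i\,E^TE$ I get $E^TE\ge 2V(t)/\max_i\theta_i$; since the coefficient $c\lambda_{\max}(_{\theta}\tilde G)\min_k\gamma_k$ is negative, this converts the bound into
\[
\dot V(t)\le 2\Big[L_h+\frac{c\lambda_{\max}(_{\theta}\tilde G)\min_k\gamma_k}{\max_i\theta_i}\Big]V(t),
\]
so that under the stated sign condition $V(t)\le V(0)e^{2\alpha t}$ with $\alpha<0$, which is Expo-Syn.

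I expect the main obstacle to be only bookkeeping: since $\lambda_{\max}(_{\theta}\tilde G)<0$, one must factor out $\min_k\gamma_k$ (not $\max_k\gamma_k$) and $\max_i\theta_i$ (not $\min_i\theta_i$) to keep a genuine \emph{upper} bound on $\dot V$, and swapping either would silently reverse an inequality. A second point I would state carefully is that the boundary case $\max_i|\theta_i-\xi_i|=\mathcal{ADCB}(\tilde G)$ in (\ref{process2}) only delivers negative \emph{semi}definiteness; the strict negativity $\lambda_{\max}(_{\theta}\tilde G)<0$ that the argument needs is, however, automatically implied by the synchronization inequality itself, since $L_h>0$ makes that inequality infeasible otherwise.
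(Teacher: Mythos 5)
Your proposal is correct and follows essentially the same route as the paper: the paper's proof uses exactly the weighted Lyapunov function $W(t)=\frac{1}{2}\sum_i\theta_i\tilde{z}_i(t)^T\tilde{z}_i(t)$, the one-sided Lipschitz bound on $h$, symmetrization of the coupling form into ${}_{\theta}\tilde{G}$, and the factor $\lambda_{\max}({}_{\theta}\tilde{G})\min_k\gamma_k/\max_i\theta_i$, with your write-up merely filling in the intermediate steps the paper compresses into two lines. Your closing observations (keeping $\min_k\gamma_k$ and $\max_i\theta_i$ on the correct sides because $\lambda_{\max}({}_{\theta}\tilde{G})<0$, and that strict negativity at the boundary case of $\mathcal{ADCB}(\tilde{G})$ is forced by the hypothesis $L_h+c\lambda_{\max}({}_{\theta}\tilde{G})\min_k\gamma_k/\max_i\theta_i<0$ with $L_h>0$) are sound refinements of, not departures from, the paper's argument.
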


\begin{proof}
Denote $\tilde{z}_i(t)=z_i(t)-z(t)$, and let
\begin{align}\label{lya-control}
W(t)=\frac{1}{2}\sum_{i=1}^N\theta_i\tilde{z}_i(t)^T\tilde{z}_i(t).
\end{align}
Then,
\begin{align*}
\dot{W}(t)\le& 2L_hW(t)+c\sum_{i=1}^N\sum_{j=1}^N\theta_i\tilde{z}_i(t)^T\tilde{G}_{ij}\Gamma\tilde{z}_j(t)\\
\le&2[L_h+c\lambda_{\max}(_{\theta}\tilde{G})\min_{k}(\gamma_k)/\max_{i}{\theta_i}]W(t)\le 0
\end{align*}
The proof is finished.
\end{proof}

\begin{myrem}
According to the Proposition 1 in \cite{chenliulu2007}, $\tilde{G}$ is a M-matrix, so there must exist a diagonal matrix $P$, such that $P\tilde{G}+\tilde{G}^TP$ is negative definite, and we can also choose the vector $\theta$ as $\theta_i=P(i,i)$. This fact means that, for control problem, NLEVec is not the unique choice, there are also many other choices, the reason we choose it is to keep consistence with the above synchronization analysis.
\end{myrem}

\section{Synchronization and control for multi-weighted and directed network}\label{main}
With the relaxation of NLEVec $\xi$ to vector $\theta$ satisfying condition (\ref{estimate}), we can use this $\theta$ to investigate the synchronization for directed CNMWs.

For CNSW, any coupling matrix, no matter it is symmetric or asymmetric, is beneficial for synchronization, so we raise the following question: \emph{for CNMWs, which have more than one coupling matrices, would they synchronize faster? In other words, do the coupling matrices all promote the synchronization process? If this fact is true, how to prove it?}

{\bf Example 2:} Before considering this problem in theory, we consider a simple example from real simulations:
\begin{align}\label{example}
\dot{z}_i(t)=h(z_i(t))+\sum_{j=1}^3G_{ij}^{1} \Gamma^1z_j(t)+\sum_{j=1}^3G_{ij}^{2}\Gamma^2z_j(t),
\end{align}
where $z_i(t)\in R^3, i=1,2,3$, $h(\cdot)$ is the Lorenz oscillator,
\begin{align*}
&G^{1}=\left[
\begin{array}{ccc}
-3& 1& 2\\
2& -4& 2\\
1& 1& -2
\end{array}
\right],
&G^{2}=\left[
\begin{array}{ccc}
-2& 1& 1\\
1& -2& 1\\
1& 1& -2
\end{array}
\right],
\end{align*}
and $\Gamma^1=\mathrm{diag}([1,2,1])$ and $\Gamma^2=I_3$. Simulations show that CNMWs can synchronize faster than single weighted network, see Fig. 1, which means that both coupling matrices are beneficial for synchronization.

\begin{figure}
\begin{center}
\includegraphics[width=0.5\textwidth,height=0.3\textheight]{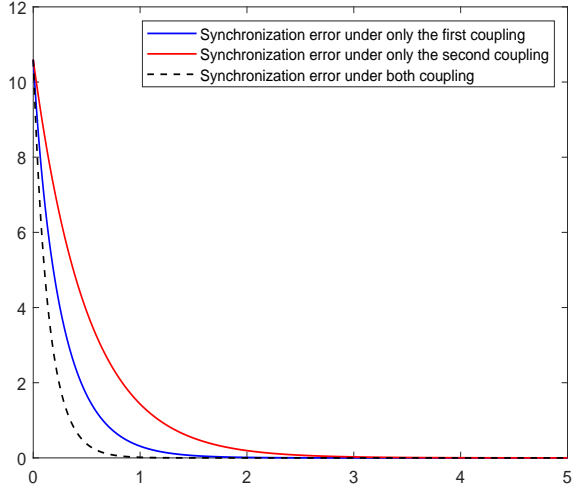}
\caption{Synchronization dynamics under one or two coupling}
\end{center}
\end{figure}

\subsection{Some discussions}

{\bf Exmaple 3:} Consider the following CNMWs model:
\begin{align}\label{example3}
\dot{z}_i(t)=\sum_{j=1}^3G_{ij}^{1} \Gamma^1z_j(t)+\sum_{j=1}^3G_{ij}^{2}\Gamma^2z_j(t),
\end{align}
where $z_i(t)\in R^2$, $G^1$ and $G^2$ are defined in Example 2.

A question arises naturally: \emph{can the above model be rewritten into a CNSW?} If it does, then the synchronization problem has already been solved.

Let $\Gamma^1=\mathrm{diag}([1,2])$ and $\Gamma^2=I_2$, then
\begin{align*}
G^1\otimes \Gamma^1+G^2\otimes \Gamma^2=\left[
\begin{array}{cccccc}
-5&0&2&0&3&0\\
0&-8&0&3&0&5\\
3&0&-6&0&3&0\\
0&3&0&-6&0&3\\
2&0&2&0&-4&0\\
0&3&0&3&0&-6
\end{array}\right],
\end{align*}
which cannot be written in the form $G\otimes \Gamma$. Therefore, the exploration of multi-weighted network model is necessary.

On the other hand, if $\Gamma^1=\Gamma^2=I_2$, for the model (\ref{example3}), it can be rewritten as:
$\dot{z}_i(t)=\sum_{j=1}^3G_{ij}z_j(t)$ with
\begin{align*}
&G=(G_{ij})=\left[
\begin{array}{ccc}
-5& 2& 3\\
3& -6& 3\\
2& 2& -4
\end{array}
\right].
\end{align*}

For matrix $G^{1}$, its NLEVec $\xi^{1}$ is $(0.3,0.2,0.5)^T$; for matrix $G^{2}$, its NLEVec $\xi^{2}$ is $(1/3,1/3,1/3)^T$; and for matrix $G$, its NLEVec $\xi$ is $(0.3214, 0.25, 0.4286)^T$, which can be regarded as a weighted combination of $\xi^{1}$ and $\xi^{2}$, i.e., $\xi_{i}\in [\min(\xi^1_i,\xi^2_i), \max(\xi^1_i,\xi^2_i)], i=1,2,3$. Inspired by these discussions, we will use weighted combination of NLEVec as the reference vector for the study of synchronization.


\begin{myrem}
Since the analysis of networks with multi-weights is similar to that with two-weights, so in the following, we will focus on networks with two coupling matrices.
\end{myrem}

\subsection{Synchronization for a network with two weighted matrices}
We consider a general model with two-weighted matrices,
\begin{align}\label{two}
\dot{z}_i(t)=h(z_i(t))+c\sum_{m=1}^2\sum_{j=1}^NG_{ij}^{m}\Gamma^{m} z_j(t),
\end{align}
where $G^m=(G_{ij}^m), m=1,2$ are both strongly connected and Metzler matrices with zero-row-sum, $\Gamma^m=\mathrm{diag}(\gamma^m_1,\cdots,\gamma^m_n)$ with $\gamma^m_k>0$, $\forall m,k$, and the other variables are all the same with that in model (\ref{single}).

\begin{myrem}
In the model (\ref{two}), the coupling strength for different coupling matrices can be different, such as: $c_1\sum_{j=1}^NG_{ij}^{1}\Gamma^{1} z_j(t)+c_2\sum_{j=1}^NG_{ij}^{2}\Gamma^{2} z_j(t)$, but by letting $c=c_1$ and replacing $G_{ij}^2$ by $c_2G_{ij}^2/c_1$, it will become (\ref{two}).
\end{myrem}

Let $\xi^1$ and $\xi^2$ be the corresponding NLEVec for matrices $G^1$ and $G^2$ in (\ref{two}), respectively. Define a new vector as
\begin{align}\label{combine}
\theta=\mu^1\xi^1+\mu^2\xi^2
\end{align}
where $\mu^1$ and $\mu^2$ are non-negative scalars with $\mu^1+\mu^2=1$,
which are to be determined later.

With this new vector, we can define matrices
\begin{align}
&G_{\theta}^1=[(\Theta-\theta\theta^{T})G^1+(G^1)^{T}(\Theta-\theta\theta^{T})]/2\label{G1theta}\\
&G_{\theta}^2=[(\Theta-\theta\theta^{T})G^2+(G^2)^{T}(\Theta-\theta\theta^{T})]/2\label{G2theta}
\end{align}

\begin{mythm}\label{perf}
For network (\ref{two}), if
\begin{align}\label{absolute}
\max_{i}|\xi^1_i-\xi^2_i|\le \mathcal{ADSB}(G^1)+\mathcal{ADSB}(G^2)
\end{align}
then we can choose scalars $\mu^1$ and $\mu^2$, such that
\begin{align}
0\le\mu^1\le \mathcal{ADSB}(G^2)/\max_{i}|\xi^1_i-\xi^2_i|\label{mu1}\\
0\le\mu^2\le \mathcal{ADSB}(G^1)/\max_{i}|\xi^1_i-\xi^2_i|\label{mu2}
\end{align}
and $\mu^1+\mu^2=1$, therefore, synchronization is realized with
\begin{align}\label{condition2}
L_h+c\sum_{m=1}^2\lambda_2(G^m_{\theta})\min_{k}\gamma_k^m/\|\Theta-\theta\theta^T\|_2<0.
\end{align}
\end{mythm}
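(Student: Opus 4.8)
The plan is to reuse the single-matrix machinery of Theorem~\ref{generalize} with the combined reference vector $\theta=\mu^1\xi^1+\mu^2\xi^2$, reducing everything to the requirement that \emph{both} matrices $G^1_\theta$ and $G^2_\theta$ in (\ref{G1theta})--(\ref{G2theta}) be negative definite in $\mathcal{TS}$. The key observation is that the deviation of $\theta$ from each individual NLEVec takes a particularly clean form. First I would compute, using $\mu^1+\mu^2=1$,
\[
\theta-\xi^1=(\mu^1-1)\xi^1+\mu^2\xi^2=\mu^2(\xi^2-\xi^1),\qquad
\theta-\xi^2=\mu^1(\xi^1-\xi^2).
\]
Writing $D=\max_i|\xi^1_i-\xi^2_i|$ and using $\mu^1,\mu^2\ge 0$, the Chebyshev distances are therefore $\max_i|\theta_i-\xi^1_i|=\mu^2 D$ and $\max_i|\theta_i-\xi^2_i|=\mu^1 D$.

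Next I would apply the sufficient condition (\ref{estimate}) separately to each coupling. Applied to $G^1$ with NLEVec $\xi^1$, negative definiteness of $G^1_\theta$ in $\mathcal{TS}$ holds whenever $\mu^2 D\le\mathcal{ADSB}(G^1)$; applied to $G^2$ with NLEVec $\xi^2$, negative definiteness of $G^2_\theta$ holds whenever $\mu^1 D\le\mathcal{ADSB}(G^2)$. After dividing by $D$, these are exactly the bounds (\ref{mu1})--(\ref{mu2}) imposed on $\mu^1$ and $\mu^2$.

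The feasibility step is where hypothesis (\ref{absolute}) enters. Combining the two bounds with $\mu^1+\mu^2=1$ and non-negativity confines $\mu^1$ to the interval $\max\{0,\,1-\mathcal{ADSB}(G^1)/D\}\le\mu^1\le\min\{1,\,\mathcal{ADSB}(G^2)/D\}$. This interval is non-empty precisely when $D\le\mathcal{ADSB}(G^1)+\mathcal{ADSB}(G^2)$, which is assumption (\ref{absolute}); hence an admissible pair $(\mu^1,\mu^2)$ exists, and with that choice both $G^1_\theta$ and $G^2_\theta$ are negative definite in $\mathcal{TS}$.

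Finally I would close the argument through the Lyapunov function (\ref{lya}) formed from the combined $\theta$. Differentiating along (\ref{two}) and repeating the estimate of Theorem~\ref{generalize}, the coupling contribution splits additively over $m=1,2$, giving $\dot V\le 2L_hV+c\sum_{m=1}^2 Z^T[G^m_\theta\otimes\Gamma^m]Z$; bounding each quadratic form on $\mathcal{TS}$ by $\lambda_2(G^m_\theta)\min_k\gamma_k^m$ and relating the transverse norm to $V$ via $\|\Theta-\theta\theta^T\|_2$ produces the decay rate in (\ref{condition2}). I expect the only genuine obstacle to be the decomposition-and-feasibility bookkeeping: once the two deviations are identified as $\mu^2 D$ and $\mu^1 D$ and their feasibility is tied back to (\ref{absolute}), the remaining Lyapunov computation is a direct replay of the single-matrix proof summed over the two couplings, with no new analytic difficulty.
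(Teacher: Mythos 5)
Your proposal is correct and follows essentially the same route as the paper's proof: the same combined vector $\theta=\mu^1\xi^1+\mu^2\xi^2$, the same reduction via (\ref{estimate}) to the deviation bounds $\mu^2 D\le\mathcal{ADSB}(G^1)$ and $\mu^1 D\le\mathcal{ADSB}(G^2)$, and the same Lyapunov computation yielding (\ref{add}) and then (\ref{condition2}). Your only addition is to spell out the feasibility interval $\max\{0,\,1-\mathcal{ADSB}(G^1)/D\}\le\mu^1\le\min\{1,\,\mathcal{ADSB}(G^2)/D\}$ and its non-emptiness under (\ref{absolute}), a detail the paper leaves implicit.
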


\begin{proof} 
For this vector $\theta$ in (\ref{combine}), it is also positive and normalized, so we can use this vector to define the Lyapunov function $V(t)$ in (\ref{lya}), and with the same proof process in Theorem \ref{generalize},
\begin{align}\label{add}
\dot{V}(t)\le 2L_hV(t)+cZ(t)^T[G_{\theta}^1\otimes \Gamma^1+G_{\theta}^2\otimes \Gamma^2]Z(t)
\end{align}
where $G^1_{\theta}$ and $G^2_{\theta}$ are defined in (\ref{G1theta}) and (\ref{G2theta}).

Then, from condition (\ref{estimate}), if inequalities
\begin{align*}
\max_{i}|\theta_i-\xi^1_i|\le \mathcal{ADSB}(G^1),
\max_{i}|\theta_i-\xi^2_i|\le \mathcal{ADSB}(G^2),
\end{align*}
hold, then $G^1_{\theta}$ and $G^2_{\theta}$ would be negative definite in $\mathcal{TS}$.

According to the definition of $\theta$ in (\ref{combine}), the above inequalities can hold if
\begin{align*}
&\mu^2\max_{i}|\xi^1_i-\xi^2_i|\le \mathcal{ADSB}(G^1), \\
&\mu^1\max_{i}|\xi^1_i-\xi^2_i|\le \mathcal{ADSB}(G^2),
\end{align*}
which are equivalent to conditions (\ref{mu1}) and (\ref{mu2}).

We continue the proof from (\ref{add}), if condition (\ref{condition2}) holds, then Expo-Syn can be realized.
\end{proof}

From the viewpoint of addition for vectors, the vector $\theta$ in (\ref{combine}) lies between $\xi^1$ and $\xi^2$, so

If $\xi^1$ and $\xi^2$ are the same, then $\theta=\xi^1=\xi^2$, that is to say, if the two coupling matrices have the same NLEVec, we can naturally use this NLEVec for investigation, which also includes the case that the two matrices are both symmetric.

Otherwise, if $\xi^1$ and $\xi^2$ are different, then,

Case 1: for fixed $\max_{i}|\xi^1_i-\xi^2_i|$, the values $\mu^1$ and $\mu^2$ are determined by indexes $\mathcal{ADSB}(G^m), m=1,2$, whose physical meaning can be described as Fig. 2, obviously, the larger $\mathcal{ADSB}(G^1)$, the larger $\mu^2$, i.e., the vector $\theta$ can deviate from $\xi^1$ larger, and vice versa.
\begin{figure}[htp]
\begin{center}
\includegraphics[width=0.5\textwidth,height=0.25\textheight]{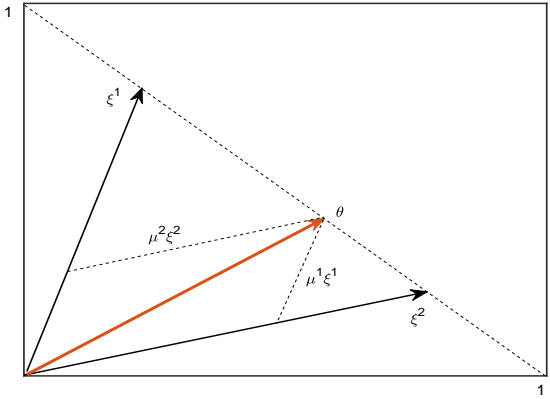}
\caption{Sketch of vector $\theta$ as the weighted addition of vectors $\xi^1$ and $\xi^2$}
\end{center}
\end{figure}

Case 2: for fixed matrices $G^1$ and $G^2$, the smaller $\max_{i}|\xi^1_i-\xi^2_i|$, the more scope for values of $\mu^1$ and $\mu^2$, and several different cases are presented in Fig. 3, where $\omega^{m}=\mathcal{ADSB}(G^{3-m})/\max_{i}|\xi^1_i-\xi^2_i|$, so $\mu^m\le\omega^m, m=1,2$. Here we just list representative cases. Of course, there are other cases which are in some sub-figures of Fig. 3, for example, when $\omega^{1}>1, \omega^{2}<1$, it is similar with Fig. 3 (b).
\begin{figure}[htp]
\begin{center}
\includegraphics[width=0.5\textwidth,height=0.25\textheight]{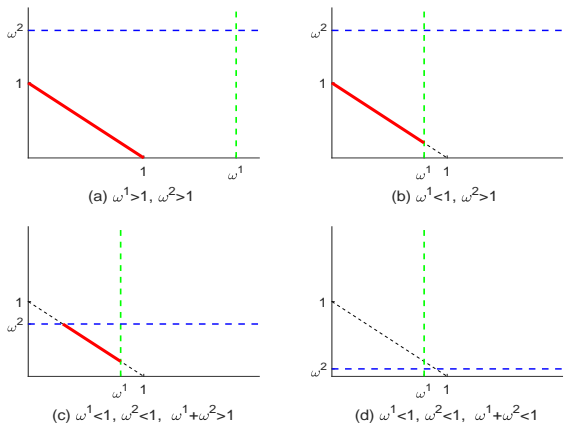}
\caption{The feasible region of $\mu^1$ (horizontal coordinate) and $\mu^2$ (vertical coordinate) under different $\omega^{1}$ and $\omega^{2}$, which are marked in red lines. }
\end{center}
\end{figure}

\begin{myrem}
A natural method for synchronization of CNMWs is to use one NLEVec as the reference vector, for example, $\xi^1$, then for the second matrix, the negative definite analysis of $G^2_{\xi^1}$ in $\mathcal{TS}$ would be the same with Section \ref{generalization}, i.e., $\max_{i}|\xi^1_i-\xi^2_i|\le \mathcal{ADSB}(G^2)$ would be the sufficient condition, and on the other hand, if we use $\xi^2$ as the reference vector, $\max_{i}|\xi^1_i-\xi^2_i|\le \mathcal{ADSB}(G^1)$ would be the sufficient condition. Without loss of generality, we assume that $\mathcal{ADSB}(G^1)<\mathcal{ADSB}({G^2})$. If $\max_{i}|\xi^1_i-\xi^2_i|\le \mathcal{ADSB}(G^1)$, the reference vector can be chosen as $\xi^1$ or $\xi^2$; if $\mathcal{ADSB}(G^1)<\max_{i}|\xi^1_i-\xi^2_i|\le \mathcal{ADSB}({G^2})$, the reference vector can be $\xi^1$, but if $\mathcal{ADSB}({G^2})<\max_{i}|\xi^1_i-\xi^2_i|$, this method would fail. With our method, we can still prove that the two matrices both promote synchronization if
\begin{align*}
\mathcal{ADSB}(G^2)<\max_{i}|\xi^1_i-\xi^2_i|\le \mathcal{ADSB}(G^1)+\mathcal{ADSB}({G^2}).
\end{align*}
\end{myrem}

A {conjecture} for CNMWs is proposed as an open problem:

``\emph{For CNMWs, only if each coupling matrix is strongly connected, then it will synchronize faster than CNSWs.}''\\
That is to say, condition (\ref{absolute}) is not needed at all.

Next, we apply the central adaptive technique on the coupling strength $c$ to realize synchronization.
\begin{mythm}\label{central}
For the network
\begin{align*}
\dot{z}_i(t)=h(z_i(t))+c(t)\sum_{m=1}^2\sum_{j=1}^NG_{ij}^{m}\Gamma^{m} z_j(t),
\end{align*}
If condition (\ref{absolute}) holds, then there exist scalars $\mu^1\ge 0$, $\mu^2\ge 0$ and $\mu^1+\mu^2=1$, such that (\ref{mu1}) and (\ref{mu2}) hold, and we can obtain the reference vector $\theta$. Therefore, synchronization can be finally realized with the adaptive rule
\begin{align*}
\dot{c}(t)=\frac{\beta}{2}\sum_{i=1}^N\theta_i[z_i(t)-\overline{z}(t)]^T[z_i(t)-\overline{z}(t)], ~~\beta>0.
\end{align*}
\end{mythm}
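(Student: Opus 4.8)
The plan is to prove synchronization for the adaptive system via an augmented Lyapunov function that combines the original quadratic $V(t)$ with a penalty term measuring the deviation of the coupling strength $c(t)$ from some sufficiently large target value $c^*$. First I would fix the reference vector $\theta=\mu^1\xi^1+\mu^2\xi^2$ exactly as in Theorem \ref{perf}, using condition (\ref{absolute}) to choose $\mu^1,\mu^2$ satisfying (\ref{mu1}) and (\ref{mu2}); this guarantees that both $G^1_\theta$ and $G^2_\theta$ are negative definite in $\mathcal{TS}$, so that $\lambda_2(G^m_\theta)<0$ for $m=1,2$. I would then select a constant $c^*$ large enough that $L_h+c^*\sum_{m=1}^2\lambda_2(G^m_\theta)\min_k\gamma^m_k/\|\Theta-\theta\theta^T\|_2<0$, which is possible precisely because the two eigenvalue terms are strictly negative.

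Next I would define the augmented Lyapunov candidate
\begin{align*}
U(t)=V(t)+\frac{1}{2\beta}(c(t)-c^*)^2,
\end{align*}
where $V(t)$ is the quadratic form (\ref{lya}) built from the same $\theta$. Differentiating along the adaptive dynamics, the derivative of the penalty term is $\frac{1}{\beta}(c(t)-c^*)\dot c(t)=(c(t)-c^*)V_0(t)$, where $V_0(t)=\frac{1}{2}\sum_i\theta_i[z_i(t)-\overline z(t)]^T[z_i(t)-\overline z(t)]=V(t)$ is exactly the quantity appearing in the adaptive rule $\dot c(t)=\beta V(t)$. Meanwhile, repeating the computation of Theorem \ref{generalize} and Theorem \ref{perf} but keeping $c(t)$ time-varying gives
\begin{align*}
\dot V(t)\le 2L_hV(t)+c(t)Z(t)^T[G^1_\theta\otimes\Gamma^1+G^2_\theta\otimes\Gamma^2]Z(t).
\end{align*}
The idea is that the cross term $+c(t)V(t)$ arising from the $c(t)$-dependent coupling will cancel against the $-c^*$ portion coming from the penalty derivative, leaving a bound governed by the fixed large strength $c^*$.

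The key step is the cancellation: adding $\dot V$ and the penalty derivative, the terms proportional to $c(t)$ combine so that $\dot U(t)$ is bounded above by $2[L_h+c^*\sum_{m}\lambda_2(G^m_\theta)\min_k\gamma^m_k/\|\Theta-\theta\theta^T\|_2]V(t)\le 0$, using the spectral estimates $Z^T[G^m_\theta\otimes\Gamma^m]Z\le 2\lambda_2(G^m_\theta)\min_k\gamma^m_k V(t)/\|\Theta-\theta\theta^T\|_2$ from the proof of Theorem \ref{generalize}. This shows $U(t)$ is non-increasing, hence bounded, which forces $c(t)$ to be bounded and $V(t)\in L^1[0,\infty)$. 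Finally I would invoke a Barbalat/LaSalle-type argument: since $V(t)$ is non-negative, integrable, and its derivative is bounded (the states and $c(t)$ remain bounded), $V(t)\to 0$, which is precisely synchronization to the dummy target $\overline z(t)$.

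The main obstacle I anticipate is the cancellation bookkeeping with a time-varying $c(t)$: one must be careful that the spectral bound $Z^T[G^m_\theta\otimes\Gamma^m]Z\le 2\lambda_2(G^m_\theta)\min_k\gamma^m_k V(t)/\|\Theta-\theta\theta^T\|_2$ is applied with the correct sign so that the surviving coefficient is exactly $c^*$ rather than $c(t)$, and that the negative definiteness of $G^m_\theta$ in $\mathcal{TS}$ (not the whole space) is legitimately used because $z_i(t)-\overline z(t)$ always lies in the transverse subspace. The secondary subtlety is that asymptotic (rather than exponential) convergence is the natural conclusion here, since $c(t)$ converges to some finite limit and the strict exponential rate of Theorem \ref{perf} is replaced by a Barbalat argument; I would state the conclusion accordingly.
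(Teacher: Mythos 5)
Your architecture is the same as the paper's: fix $\theta=\mu^1\xi^1+\mu^2\xi^2$ via (\ref{absolute})--(\ref{mu2}), augment $V(t)$ from (\ref{lya}) with a quadratic penalty in $c(t)-c^*$, show the augmented function is non-increasing, and finish with a Barbalat/LaSalle step (which the paper delegates to \cite{liu2010}). However, the central cancellation step fails as you wrote it, because your penalty coefficient $\frac{1}{2\beta}$ is not tunable. Since $\dot c(t)=\beta V(t)$, the penalty contributes $(c(t)-c^*)V(t)$ \emph{independently of} $\beta$, while the coupling contributes at best $2c(t)\Lambda V(t)$, where $\Lambda:=\sum_{m=1}^2\lambda_2(G^m_\theta)\min_k\gamma^m_k/\|\Theta-\theta\theta^T\|_2<0$. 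Summing,
\begin{align*}
\dot U(t)\le \big[2L_h-c^*\big]V(t)+\big(1+2\Lambda\big)c(t)V(t),
\end{align*}
and your claimed bound $2[L_h+c^*\Lambda]V(t)$ follows only if $(c(t)-c^*)(1+2\Lambda)\le 0$. Whenever $|\Lambda|<1/2$, the surviving term $(1+2\Lambda)c(t)V(t)$ is positive and grows with the monotonically increasing $c(t)$, so monotonicity of $U$ is lost. Tellingly, the inequality you assert is exactly what comes out if you instead take $U(t)=V(t)+\frac{\kappa}{2\beta}(c(t)-c^*)^2$ with $\kappa=-2\Lambda$: then $2\Lambda c(t)V+\kappa c(t)V=0$ and $\dot U\le 2(L_h+\Lambda c^*)V\le 0$ for $c^*$ large. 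The coefficient must be matched to the spectral data; it cannot be the universal constant $1$.

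The paper resolves this precisely by inserting such a free parameter: it takes $V_a(t)=V(t)+\frac{\alpha}{\beta}(c^\star-c(t))^2$, chooses $\alpha>0$ small enough that $g(\alpha)=\sum_{m=1}^2\lambda_2(G^m_\theta)\min_k\gamma^m_k+\alpha\lambda_{\max}(\Theta-\theta\theta^T)<0$, and then $c^\star$ large enough that $L_h-\alpha c^\star<0$; the $c(t)$-proportional terms are grouped as $c(t)g(\alpha)Z(t)^TZ(t)\le 0$ (legitimate since $\dot c\ge 0$ keeps $c(t)\ge 0$), while $-2\alpha c^\star V$ absorbs $2L_hV$. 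So your proposal needs only a one-line repair ($\kappa=-2\Lambda$, or the paper's $\alpha$), after which your remaining plan --- $U$ bounded, hence $c(t)$ bounded and $V\in L^1$, Barbalat, and asymptotic rather than exponential synchronization --- matches the tail the paper omits. One residual caution you share with the paper: Barbalat needs $\dot V$ bounded, which requires boundedness of the trajectories themselves, not merely of $V$; the one-sided Lipschitz condition on $h$ alone does not supply this, and both you and the paper implicitly lean on \cite{liu2010} for it.
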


\begin{proof} 
Using the vector $\theta$ in (\ref{combine}), we define
\begin{align*}
V_{a}(t)=V(t)+\frac{\alpha}{\beta}(c^{\star}-c(t))^2,
\end{align*}
where $V(t)$ is defined in (\ref{lya}), $c^{\star}>0$ and $\alpha>0$ satisfy conditions: $L_h-\alpha c^{\star}<0$ and
$g(\alpha)=\sum_{m=1}^2\lambda_2(G^m_{\theta})\min_{k}\gamma_k^m+\alpha\lambda_{\max}(\Theta-\theta\theta^T)<0$, where $\lambda_{\max}(\cdot)$ is the largest EVal.

Then, for $Z(t)\in \mathcal{TS}$, the derivative of $V(t)$ would be
\begin{align*}
\dot{V}_{a}(t)\le &2L_hV(t)+c(t)Z(t)^T[G_{\theta}^1\otimes \Gamma^1+G_{\theta}^2\otimes \Gamma^2]Z(t)\\
&-\alpha(c^{\star}-c(t))Z(t)^T[(\Theta-\theta\theta^T)\otimes I_n]Z(t)\\
\le &2(L_h-\alpha c^{\star})V(t)+c(t)g(\alpha)Z(t)^TZ(t)\le 0
\end{align*}

The rest are similar with that in \cite{liu2010}, here we omit it.
\end{proof}
%

\subsection{Control for a network with two weighted matrices}
Next, we consider the pinning control problem. The network with pinning control added on the first node is
\begin{align}\label{control2}
\dot{z}_i(t)=\left\{
\begin{array}{ll}
h(z_1(t))+c\sum_{m=1}^2\sum_{j=1}^NG_{1j}^{m}\Gamma^{m} z_j(t)\\
\hfill -c\sum_{m=1}^2d^m\Gamma^m(z_1(t)-z(t)),\\
h(z_i(t))+c\sum_{m=1}^2\sum_{j=1}^NG_{ij}^{m}\Gamma^{m} z_j(t),\hfill i\ne 1,
\end{array}
\right.
\end{align}
where $z(t)$ is the synchronization target with $\dot{z}(t)=h(z(t))$, and $d^1>0, d^2>0$.

Define new matrices $\tilde{G}^m=(\tilde{G}_{ij}^m), m=1,2$, where
\begin{align*}
\tilde{G}_{ij}^m=\left\{
\begin{array}{ll}
G_{11}^m-d^m,&i=j=1\\
G_{ij}^m,&\mathrm{otherwise}
\end{array}
\right.
\end{align*}

\begin{mythm}
For network (\ref{control2}), let $\xi^m$ is NLEVec of $G^m, m=1,2$, if
\begin{align}\label{absolute2}
\max_{i}|\xi^1_i-\xi^2_i|\le \mathcal{ADCB}(G^1)+\mathcal{ADCB}(G^2)
\end{align}
then we can choose scalars $\nu^1$ and $\nu^2$, such that
\begin{align*}
0\le\nu^1\le \mathcal{ADCB}(\tilde{G}^2)/\max_{i}|\xi^1_i-\xi^2_i|\\
0\le\nu^2\le \mathcal{ADCB}(\tilde{G}^1)/\max_{i}|\xi^1_i-\xi^2_i|
\end{align*}
and $\nu^1+\nu^2=1$. Therefore, Expo-Syn can be realized if
\begin{align*}
L_h+c\sum_{m=1}^2\lambda_{\max}(_{\theta}\tilde{G}^m)\min_{k}(\gamma_k^m)/\max_{i}{\theta_i}<0,
\end{align*}
where
\begin{align}\label{nucombine}
\theta=\nu^1\xi^1+\nu^2\xi^2.
\end{align}
\end{mythm}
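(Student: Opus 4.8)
The plan is to mirror the two arguments already established: the convex-combination technique of Theorem~\ref{perf} and the single-matrix pinning estimate (\ref{process2}). First I would check that the reference vector $\theta=\nu^1\xi^1+\nu^2\xi^2$ in (\ref{nucombine}) is again strictly positive and normalized, since $\xi^1,\xi^2>0$ are normalized and $\nu^1,\nu^2\ge 0$ with $\nu^1+\nu^2=1$. Hence $\Theta=\mathrm{diag}(\theta)$ is positive and the control Lyapunov functional $W(t)$ of (\ref{lya-control}), built from this $\theta$, is a legitimate positive-definite measure of the error $\tilde z_i(t)=z_i(t)-z(t)$.

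Next I would carry out the derivative computation, for which the key preliminary is that the pinning correction folds \emph{exactly} into the augmented matrices $\tilde{G}^m$. Using the zero-row-sum of each $G^m$ one has $\sum_j G_{1j}^m\Gamma^m z(t)=0$, so $\sum_j G_{1j}^m\Gamma^m z_j=\sum_j G_{1j}^m\Gamma^m\tilde z_j$; the extra term $-c\sum_m d^m\Gamma^m(z_1-z)=-c\sum_m d^m\Gamma^m\tilde z_1$ then precisely replaces $G_{11}^m$ by $G_{11}^m-d^m=\tilde G_{11}^m$, leaving the other rows untouched. Thus $\dot{\tilde z}_i=h(z_i)-h(z)+c\sum_m\sum_j\tilde G_{ij}^m\Gamma^m\tilde z_j$ uniformly in $i$, and differentiating $W$, applying the Lipschitz bound on $h$, and writing $\tilde Z(t)=(\tilde z_1(t)^T,\dots,\tilde z_N(t)^T)^T$ yields, exactly as in the preceding single-matrix control theorem,
\[
\dot W(t)\le 2L_hW(t)+c\sum_{m=1}^2\tilde Z(t)^T[(_{\theta}\tilde G^m)\otimes\Gamma^m]\tilde Z(t).
\]

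The central step is to make each $_{\theta}\tilde G^m$ negative definite on all of $R^N$. Here I would invoke estimate (\ref{process2}) with reference vector $\xi^m$ (the NLEVec of the uncontrolled $G^m$), which gives negative definiteness whenever $\max_i|\theta_i-\xi^m_i|\le\mathcal{ADCB}(\tilde G^m)$. The identity that drives the combination, already used in Theorem~\ref{perf}, is $\theta-\xi^1=\nu^2(\xi^2-\xi^1)$ and $\theta-\xi^2=\nu^1(\xi^1-\xi^2)$, so both deviations are governed by the single quantity $\max_i|\xi^1_i-\xi^2_i|$: the two requirements reduce to $\nu^2\max_i|\xi^1_i-\xi^2_i|\le\mathcal{ADCB}(\tilde G^1)$ and $\nu^1\max_i|\xi^1_i-\xi^2_i|\le\mathcal{ADCB}(\tilde G^2)$, i.e. exactly the stated bounds on $\nu^1,\nu^2$. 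Existence of such a pair on the simplex $\nu^1+\nu^2=1$ demands that the two upper bounds sum to at least $\max_i|\xi^1_i-\xi^2_i|$, which is supplied by condition (\ref{absolute2}). With both $_{\theta}\tilde G^m$ negative definite, bounding $\tilde Z^T[(_{\theta}\tilde G^m)\otimes\Gamma^m]\tilde Z\le\lambda_{\max}(_{\theta}\tilde G^m)\min_k\gamma_k^m\,\tilde Z^T\tilde Z$ and relating $\tilde Z^T\tilde Z$ to $W$ through $\max_i\theta_i$ produces the decay coefficient $L_h+c\sum_m\lambda_{\max}(_{\theta}\tilde G^m)\min_k(\gamma_k^m)/\max_i\theta_i$, so $\dot W\le 0$ and Expo-Syn follows under the stated inequality.

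I expect the main difficulty to be bookkeeping rather than conceptual. The delicate points are verifying the clean cancellation that folds the pinning term into $\tilde G^m$, and reconciling the reference vector: $\mathcal{ADCB}(\tilde G^m)$ is measured from $\xi^m$, the NLEVec of $G^m$ and \emph{not} of $\tilde G^m$, so one must confirm that estimate (\ref{process2}) is written in precisely this form before applying it verbatim. One should also check carefully that feasibility of $(\nu^1,\nu^2)$ genuinely hinges on $\mathcal{ADCB}(\tilde G^1)+\mathcal{ADCB}(\tilde G^2)$ dominating the Chebyshev distance, and that condition (\ref{absolute2}) as stated indeed guarantees this.
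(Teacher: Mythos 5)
Your proposal is correct and follows essentially the same route as the paper's (much terser) proof: choose $\theta$ by (\ref{nucombine}), use the estimate (\ref{process2}) together with $\theta-\xi^1=\nu^2(\xi^2-\xi^1)$ and $\theta-\xi^2=\nu^1(\xi^1-\xi^2)$ to make both $(_{\theta}\tilde{G}^m)$ negative definite, then run the Lyapunov argument with $W(t)$ from (\ref{lya-control}); the cancellation that folds the pinning term into $\tilde{G}^m$ via the zero-row-sum of $G^m$, which you verify explicitly, is left implicit in the paper. Your closing caveat is well founded: feasibility of $(\nu^1,\nu^2)$ on the simplex requires $\max_i|\xi^1_i-\xi^2_i|\le\mathcal{ADCB}(\tilde{G}^1)+\mathcal{ADCB}(\tilde{G}^2)$, and condition (\ref{absolute2}) as printed, with $\mathcal{ADCB}(G^m)$, does not deliver this --- indeed, since $\lambda_{\max}(_{\xi^m}G^m)=0$ the definition (\ref{cbound}) gives $\mathcal{ADCB}(G^m)=0$, so (\ref{absolute2}) as written would force $\xi^1=\xi^2$ and trivialize the theorem; it is evidently a typo for the tilde matrices, and the paper's own proof silently uses the corrected version, exactly as you do.
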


\begin{proof} Since (\ref{absolute2}) holds, so there exists a vector $\theta$ defined in (\ref{nucombine}), such that $\max_{i}|\theta_i-\xi^m_i|\le \mathcal{ADCB}(\tilde{G}^m), m=1,2$, i.e., matrices $(_{\theta}\tilde{G}^m)$ are both negative definite.

Denote $\tilde{z}_i(t)=z_i(t)-z(t)$, and choose the Lyapunov function in (\ref{lya-control}), then
\begin{align*}
\dot{W}(t)\le& 2L_hW(t)+c\sum_{m=1}^2\sum_{i,j=1}^N\theta_i\tilde{z}_i(t)^T\tilde{G}_{ij}^{m}\Gamma^{m} \tilde{z}_j(t)\\
\le&2\bigg[L_h+c\sum_{m=1}^2\lambda_{\max}(_{\theta}\tilde{G}^m)\min_{k}(\gamma_k^m)/\max_{i}{\theta_i}\bigg]W(t)\\
\le&0
\end{align*}
The proof is finished.
\end{proof}

The discussions about the chosen of parameters $\nu^1$ and $\nu^2$ are similar to that for $\mu^1$ and $\mu^2$, here we omit it. Moreover, we can also get the corresponding adaptive rule for coupling strength, we just list the result.

\begin{mythm}
For network (\ref{control2}), suppose the condition (\ref{absolute2}) holds, then we can obtain the reference vector $\theta$ defined in (\ref{nucombine}). Therefore, synchronization to target $z(t)$ can be finally realized with the adaptive rule
\begin{align*}
\dot{c}(t)=\frac{\beta}{2}\sum_{i=1}^N\theta_i[z_i(t)-{z}(t)]^T[z_i(t)-{z}(t)], ~~\beta>0.
\end{align*}
\end{mythm}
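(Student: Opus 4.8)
The plan is to mirror the adaptive argument of Theorem \ref{central}, replacing the dummy target $\overline{z}(t)$ and the transverse-space matrices $G^m_{\theta}$ by the true target $z(t)$ and the pinning matrices ${}_{\theta}\tilde{G}^m$. The key structural advantage of the control setting is that, once condition (\ref{absolute2}) secures a vector $\theta$ as in (\ref{nucombine}) with $\max_i|\theta_i-\xi^m_i|\le\mathcal{ADCB}(\tilde{G}^m)$, both matrices ${}_{\theta}\tilde{G}^m$ are negative definite on the whole of $R^N$ (not merely on $\mathcal{TS}$), exactly as established in the preceding non-adaptive control theorem. Consequently no a priori lower bound on the coupling strength is required, and the adaptive law is free to grow $c(t)$ until synchronization is attained.

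First I would take the augmented Lyapunov function $W_a(t)=W(t)+\frac{\alpha}{\beta}(c^\star-c(t))^2$, where $W(t)$ is the function (\ref{lya-control}) built from this $\theta$ and $c^\star>0$, $\alpha>0$ are constants fixed below. Writing $\tilde{z}=(\tilde{z}_1^T,\dots,\tilde{z}_N^T)^T$, the stated rule rewrites as $\dot{c}(t)=\beta W(t)$ (since $\sum_i\theta_i\tilde{z}_i^T\tilde{z}_i=2W$), so the cross term contributes $\frac{d}{dt}\frac{\alpha}{\beta}(c^\star-c)^2=-2\alpha(c^\star-c)W$. Combining this with the bound on $\dot{W}(t)$ already derived in the previous theorem's proof and with the identity $2W=\tilde{z}^T[\Theta\otimes I_n]\tilde{z}$, the $c(t)$-dependent pieces collect into
\begin{align*}
\dot{W}_a(t)\le 2(L_h-\alpha c^\star)W(t)+c(t)\,\tilde{g}(\alpha)\,\tilde{z}(t)^T\tilde{z}(t),
\end{align*}
with $\tilde{g}(\alpha)=\sum_{m=1}^2\lambda_{\max}({}_{\theta}\tilde{G}^m)\min_k\gamma^m_k+\alpha\max_i\theta_i$. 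The choice of constants then goes through cleanly: since each ${}_{\theta}\tilde{G}^m$ is negative definite we have $\lambda_{\max}({}_{\theta}\tilde{G}^m)<0$, hence $\tilde{g}(0)<0$; as $\tilde{g}$ is continuous and increasing in $\alpha$, a sufficiently small $\alpha>0$ keeps $\tilde{g}(\alpha)<0$, and I would then fix $c^\star>L_h/\alpha$ so that $L_h-\alpha c^\star<0$. Both terms on the right are then nonpositive, giving $\dot{W}_a(t)\le 0$.

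The main obstacle is converting $\dot{W}_a\le0$ into genuine convergence, since nonpositivity of the derivative alone yields only boundedness. Here I would argue as in \cite{liu2010}: $W_a$ is nonincreasing and bounded below, so it converges and both $W(t)$ and $c(t)$ stay bounded; integrating the sharper inequality $\dot{W}_a\le 2(L_h-\alpha c^\star)W$ (valid because the second term is nonpositive) gives $\int_0^\infty W(t)\,dt<\infty$, and boundedness of $\dot{W}$ then lets Barbalat's lemma force $W(t)\to0$, i.e.\ $z_i(t)\to z(t)$, with $c(t)$ tending to a finite limit. The only delicate point worth double-checking is that the symmetrization producing ${}_{\theta}\tilde{G}^m$, together with the eigenvalue/$\Gamma^m$ estimate, is applied consistently with the positivity of $\min_k\gamma^m_k$ and $\max_i\theta_i$, exactly as in the non-adaptive control theorem; once that bound is in hand, the adaptive closure is routine.
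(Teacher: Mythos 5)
Your proposal is correct and is essentially the proof the paper intends: the paper states this theorem without proof (``we just list the result''), and your argument is the faithful control-space analog of the paper's proof of Theorem \ref{central}, with $W_a(t)=W(t)+\frac{\alpha}{\beta}(c^{\star}-c(t))^2$, the negative definiteness of $({}_{\theta}\tilde{G}^m)$ on all of $R^N$ replacing the transverse-space argument, and $\tilde{g}(\alpha)=\sum_{m=1}^2\lambda_{\max}({}_{\theta}\tilde{G}^m)\min_k\gamma^m_k+\alpha\max_i\theta_i$ playing the role of the paper's $g(\alpha)$ (note $\lambda_{\max}(\Theta)=\max_i\theta_i$). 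Your Barbalat-style closure also matches what the paper delegates to \cite{liu2010}, and your implicit use of $c(t)\ge 0$ is harmless since $\dot{c}(t)\ge 0$.
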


\section{Conclusion}\label{con}

The whole paper is carried on around NLEVec of coupling matrices. We firstly generalize the synchronization analysis technique from using NLEVec to any vector, whose Chebyshev distance to NLEVec should be less than $\mathcal{ADSB}$, so this generalization makes the design of Lyapunov function more flexibly than previous works. Then based on this new technique, for directed CNMWs, we can choose another weight coefficients to combine all NLEVec of multiple coupling weights, and prove its validity for (adaptive) synchronization and control. We finally conclude that directed CNMWs can accelerate synchronization than that with a single weight.

This paper just uncovers a corner of CNMWs, and there are still many important questions to be solved, for example, each coupling matrix can be not necessarily strongly connected, and maybe jointly connected condition is enough; non-diagonal elements in coupling matrices can be negative, i.e., the relationship between nodes can be competitive; multiple time delays and distributed adaptive rules can be considered in CNMWs; etc.

\end{document}